\def\Ac{{\mathcal A}}
\def\Cbb{{\mathbb C}}
\def\Ec{{\mathcal E}}
\def\Gc{{\mathcal G}}
\def\Hc{{\mathcal H}}
\def\Ibb{{\mathbb I}}
\def\Jc{{\mathcal J}}
\def\Lc{{\mathcal L}}
\def\Nc{{\mathcal N}}
\def\Pbb{{\mathbb P}}
\def\Rc{{\mathcal R}}
\def\Rbb{{\mathbb R}}
\def\Vc{{\mathcal V}}
\def\0{{\bf 0}}
\newcommand{\bitem}{\begin{itemize}}
\newcommand{\eitem}{\end{itemize}}
\newcommand{\btabular}{\begin{tabular}}
\newcommand{\etabular}{\end{tabular}}
\newcommand{\bcenter}{\begin{center}}
\newcommand{\ecenter}{\end{center}}
\newcommand{\bea}{\begin{eqnarray}}
\newcommand{\eea}{\end{eqnarray}}
\newcommand{\bean}{\begin{eqnarray*}}
\newcommand{\eean}{\end{eqnarray*}}
\newcommand{\ba}{\left[ \begin{array}}
\newcommand{\ea}{\\ \end{array} \right]}
\newcommand{\bear}{\begin{array}}
\newcommand{\eear}{\\ \end{array}}
\newcommand{\non}{\nonumber}
\newcommand*{\QEDB}{\hfill\ensuremath{\blacksquare}}%
\newcommand*{\QET}{\hfill\ensuremath{\triangleleft}}
\newcommand{\norm}[1]{\left\lVert#1\right\rVert}
\newcounter{subequation}
\def\beasub{\addtocounter{equation}{+1}
\setcounter{subequation}{\value{equation}}
\setcounter{equation}{0}
\renewcommand{\theequation}{\arabic{subequation}\alph{equation}}
\begin{eqnarray}}
\def\eeasub{\end{eqnarray}
\setcounter{equation}{\value{subequation}}
\renewcommand{\theequation}{\arabic{equation}}}
\def\inf{\operatornamewithlimits{inf\vphantom{p}}}
\newtheorem{problem}{Problem}
\newtheorem{theorem}{Theorem}[section]
\newtheorem{definition}{Definition}[section]
\newtheorem{assumption}{Assumption}[section]
\newtheorem{lemma}[theorem]{Lemma}
\newtheorem{rem}{Remark}
\title{\LARGE \bf A Zero-Sum Game Framework for Optimal Sensor Placement in Uncertain Networked Control Systems under Cyber-Attacks
}
\author{Anh Tung Nguyen$^{1}$, Sribalaji C. Anand$^{2}$, and Andr\'e M. H. Teixeira$^{1}$
\thanks{*This work is supported by the Swedish Research Council under
the grants 2018-04396 and 2021-06316 and by the Swedish Foundation for Strategic Research.}
\thanks{$^{1}$ Anh Tung Nguyen and Andr\'e M. H. Teixeira are with the Department of Information Technology, Uppsala University, PO Box 337, SE-75105, Uppsala, Sweden. {\tt\small \{anh.tung.nguyen, andre.teixeira\}@it.uu.se}}
\thanks{$^{2}$ Sribalaji C. Anand is with the Department of Electrical Engineering, Uppsala University, PO Box 65, SE-75103, Uppsala, Sweden. {\tt\small sribalaji.anand@angstrom.uu.se}}%
}
\begin{document}

\maketitle
\thispagestyle{empty}
\pagestyle{empty}

\begin{abstract}
This paper proposes a game-theoretic approach to address the problem of optimal sensor placement against an adversary in uncertain networked control systems. 
The problem is formulated as a zero-sum game with two players, namely a malicious adversary and a detector. 
Given a protected performance vertex, we consider a detector, with uncertain system knowledge, that selects another vertex on which to place a sensor and monitors its output with the aim of detecting the presence of the adversary.
On the other hand, the adversary, also with uncertain system knowledge, chooses a single vertex and conducts a cyber-attack on its input.
The purpose of the adversary is to drive the attack vertex as to maximally disrupt the protected performance vertex while remaining undetected by the detector.
As our first contribution,
the game payoff of the above-defined zero-sum game is formulated in terms of the Value-at-Risk of the adversary's impact.
However, this game payoff corresponds to an intractable optimization problem.
To tackle the problem, we adopt the scenario approach to approximately compute the game payoff.
Then, the optimal monitor selection is determined by analyzing the  equilibrium of the zero-sum game.
The proposed approach is illustrated via a numerical example of a 10-vertex networked control system.
\end{abstract}

\section{Introduction}
Networked control systems have been playing a crucial role in modeling, analysis, and operation of real-world large-scale interconnected systems such as power systems, transportation networks, and water distribution  networks.
Those systems consist of multiple interconnected  subsystems which generally communicate with each other via insecure communication channels to share their information.
This insecure protocol may leave the networked control systems vulnerable to cyber-attacks such as denial-of-service and false-data injection attacks \cite{teixeira2015secure}, inflicting serious financial loss and civil damages. 
Reports on actual damages such as Stuxnet \cite{falliere2011w32} and Industroyer \cite{kshetri2017hacking} have described the catastrophic consequences of such cyber-attacks for an Iranian nuclear program and a Ukrainian power grid, respectively.
Motivated by the above observation, cyber-physical security has increasingly received much attention from
control society in recent years.


One of the most popular security metrics is
the game-theoretic approach that has been successfully applied to deal with the problem of robustness, security, and resilience of networked control systems \cite{zhu2015game}.
This approach affords us to address the robustness and security of networked control systems within the common well-defined framework  of $\Hc_\infty$ robust control design.
Further, many other concepts of games considering networked systems subjected to cyber-attacks such as dynamic  \cite{gupta2016dynamic}, stochastic  \cite{miao2018hybrid},
network monitoring  \cite{milovsevic2019network,pirani2021game}, and zero-sum games \cite{nguyen2022single}
have been recently studied.
Although the above games were successful in studying control systems subjected to cyber-attacks such as denial-of-service and stealthy data injection attacks, the full system model knowledge was assumed to be available to both the malicious adversary and the detector.
This assumption might be restrictive when it comes to large-scale interconnected systems which can consist of a huge number of subsystems.
This can be explained by a variety of facts such as $(i)$ limited availability of computational resources for modeling, $(ii)$ limited availability of modeling data, and $(iii)$ modeling errors.
Thus, the adversary and the detector might have  limited system knowledge instead of  accurate system parameters, which will be addressed throughout this paper.

In this paper, we deal with the problem of optimal sensor placement against an adversary in an uncertain networked control system which is represented by interconnected vertices.
Given a protected performance vertex, the detector monitors the system by selecting a single monitor vertex and placing a sensor to measure its output with the purpose of detecting cyber-attacks.
Meanwhile, the adversary chooses a single vertex to attack and directly injects attack signals into its input via the wireless network.
The aim of the adversary is to steer the attack vertex as to maximally disrupt the protected performance vertex while remaining undetected by the detector.
The contributions of this paper are the following

\begin{enumerate}
    \item 
    The problem of optimal sensor placement against the adversary is formulated as a zero-sum game between two strategic players, i.e., the adversary and the detector, with the same uncertain system knowledge.
    \item Due to the uncertainty, the game payoff of the zero-sum game, which is a min-max optimization problem,  is computationally intractable~\cite{anand2021risk}.  
    To deal with the problem, we adopt the scenario approach \cite{calafiore2007probabilistic} to approximately compute the above game payoff.
    \item We show that the existence of a finite solution to the problem is related to the system-theoretic properties of the  dynamical system, namely its invariant zeros and relative degrees.
    \item The solutions to the problem of the optimal sensor placement are provided by investigating the pure and the mixed-strategy equilibrium of the zero-sum game in a numerical example.
\end{enumerate}

\begin{figure}[!t]
	\centering
	\includegraphics[width=0.41\textwidth]{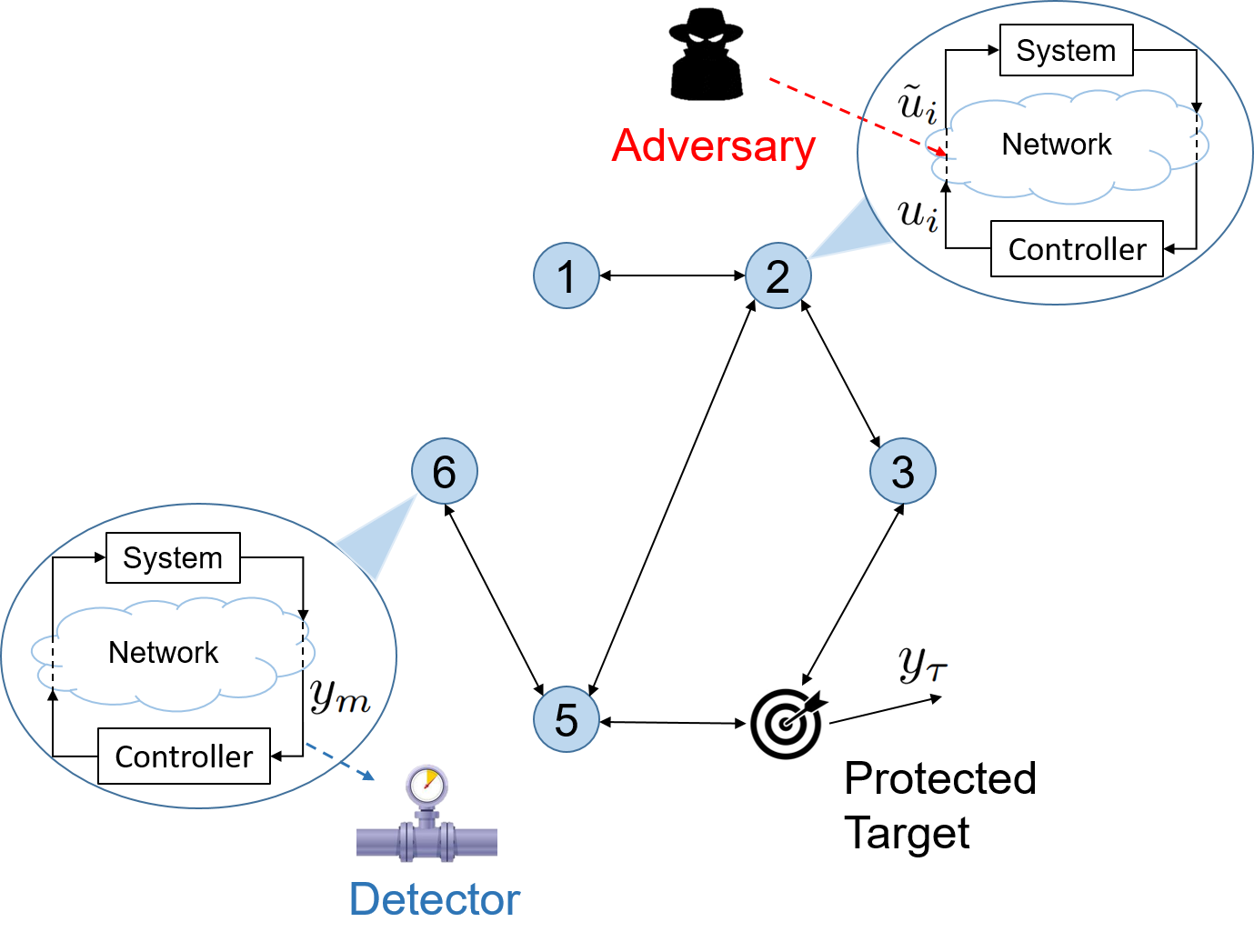}
	\caption{Visualization of a zero-sum game between a detector and an adversary in a networked control system.}
	\label{fig:illustration}
\end{figure}

We conclude this section by providing the notations which are used throughout this paper.
The problem description is given in Section \ref{sec:prob_back}.
Thereafter, Section \ref{sec:prob_form} formulates the problem of optimal sensor placement as a zero-sum game with the game payoff based on a risk metric.
The evaluation of the game payoff is carried out in Section \ref{sec:evl_game}.
Section~\ref{sec:num_eg} presents a numerical example of the zero-sum game between an adversary and a detector and computes the optimal monitor selection based on the mixed-strategy Nash equilibrium.
Concluding remarks are provided in Section \ref{sec:concl}.

{\bf Notation:} 
the set of real positive numbers is denoted as $\Rbb_+$; $\Rbb^n$ and $\Rbb^{n \times m}$ stand for sets of real $n$-dimensional vectors and $n$-row $m$-column matrices, respectively.
Let us define $e_i \in \Rbb^n$ with all zero elements except the $i$-th element that is set as $1$.
A continuous-time system with the state-space model $\dot{x}(t) = Ax(t) + Bu(t),\; y(t) = Cx(t) + Du(t)$ is denoted as $\Sigma \triangleq (A,B,C,D)$.
Consider the norm $\norm{x}_{\Lc_2 [0,T]}^2 \triangleq \int_{0}^{T} \norm{x(t)}_2^2~dt$.
The space of square-integrable  functions is defined as $\Lc_{2} \triangleq \bigl\{f: \Rbb_{+} \rightarrow \Rbb ~|~ \norm{f}_{\Lc_2 [0,\infty]} < \infty \bigr\} $
and the extended space is defined as $\Lc_{2e} \triangleq \bigl\{ f: \Rbb_{+} \rightarrow \Rbb ~|~ \norm{f}_{\Lc_2 [0,T]} < \infty,~ \forall~ 0 < T < \infty \bigr\} $.
We denote $\Ibb_\Ac(x)$ as an indicator function
such that $\Ibb_\Ac(x) = 1$ if $x \in \Ac$, otherwise $\Ibb_\Ac(x) = 0$.
The probability of $X$ is denoted as $\Pbb(X)$.
For $x \in \Rbb$, $\left \lceil{x}\right \rceil$ represents a value rounded to the nearest integer greater than or equal to $x$.
Let $\Gc \triangleq (\Vc, \Ec, A, \Theta)$ be an undirected weighted digraph with the set of $N$ vertices $\Vc = \{v_1, v_2,...,v_N\}$,
the set of edges $\Ec \subseteq \Vc \times \Vc $, the  weighted adjacency matrix $A \triangleq [a_{ij}]$, and the weighted self-loop matrix $\Theta$.
For any $(v_i,v_j) \in \Ec, ~i\neq j$, the element of the weighted matrix $a_{ij}$ is positive, 
and with $(v_i,v_j) \notin \Ec$ or $i = j$, $a_{ij} = 0$. 
The degree of vertex $v_i$ is denoted as 
$d_i =  \sum_{j=1}^{n} a_{ij}$ and the degree matrix of the graph $\Gc$ is defined as 
$D = {\bf diag}\big(d_1, d_2,\dots,d_N\big)$, where ${\bf diag}$ stands for a diagonal matrix.
For each vertex $v_i$, it has a positive weighted self-loop gain $\theta_i > 0$. The weighted self-loop matrix of the graph $\Gc$ is defined as $\Theta = {\bf diag}\big(\theta_1, \theta_2,\dots,\theta_N\big)$.
The Laplacian matrix, representing  the graph $\Gc$, is defined as $L = [\ell_{ij}] = D - A + \Theta$.
Further, $\Gc$ is called an undirected graph if  $A$ is symmetric.
An edge of an undirected graph $\Gc$ is denoted by a pair $(v_i,v_j) \in \Ec$. 
An undirected graph is connected if for any pair of vertices there exists at least one path between two vertices.
The set of all neighbours of vertex $v_i$ is denoted as $\Nc_i = \{v_j \in \Vc: (v_i,v_j) \in \Ec \}$.

\section{Problem description} 
\label{sec:prob_back}
This section firstly presents the description of a networked control system.
Then, we introduce 
a malicious adversary who with limited system knowledge conducts a cyber-attack to maliciously affect the system performance.
\subsection{Networked control system description}
Consider a networked control system associated with a connected undirected graph $\Gc \triangleq (\Vc, \Ec, A, \Theta)$ with $N$ vertices, the state-space model of each  one-dimensional vertex $v_i, ~i \in \bigl\{1,2,\ldots,N\bigr\}$, is described as
\begin{align}
	\dot x_i^\Delta(t) &=  \sum_{v_j \in \Nc_i} \ell_{ij}^\Delta \big(x_i^\Delta(t) - x_j^\Delta(t)\big) + \tilde{u}_i(t), 
	\label{sys:xi}
	\\
	y_\tau^\Delta(t) &= x^\Delta_\tau(t), \label{sys:xi_t}
\end{align}
where  $x_i^\Delta(t),\;\tilde{u}_i(t) \in \Rbb$ are the state of vertex $v_i$ and its control input received from its controller over the wireless network (see Fig. \ref{fig:illustration}), respectively. 
The performance of the networked control system \eqref{sys:xi} is measured via the state of a given vertex $v_\tau \in \Vc$ in \eqref{sys:xi_t}. 
The weight parameters $\ell_{ij}^\Delta, ~ \forall (v_i,v_j) \in \Ec,$ are uncertain and assumed to be structured as $\ell_{ij}^\Delta \triangleq \bar{\ell}_{ij} + \delta_{ij}$, where $\bar{\ell}_{ij}$ and $\delta_{ij}$ are the nominal value and the bounded probabilistic uncertainty of $\ell_{ij}^\Delta$, respectively.

First, we consider the wireless network healthy, i.e., the absence of cyber-attacks. 
Thus, the received control input $\tilde{u}_i(t)$ of vertex $v_i,~i \in \bigl\{1,2,\ldots,N\bigr\}$, is the same as the  control input  sent by its controller: 
\begin{align}
    \tilde{u}_i(t) = u_i(t) = -\theta_i x_i^\Delta(t),
	\label{sys:u}
\end{align}
where $u_i(t)$ is the control input designed and sent by the controller of vertex $v_i$. 
$\theta_i \in \Rbb_+$ is an adjustable self-loop control gain of vertex $v_i$.

For convenience, let us denote $x^\Delta(t) \triangleq \big[x_1^\Delta(t),~x_2^\Delta(t),\ldots,~x_N^\Delta(t)\big]^\top$ as the state of the networked control system.
The dynamics of the networked control system \eqref{sys:xi} under the control law \eqref{sys:u} can be rewritten as
\begin{align}
    \dot x^\Delta(t) &= - L^\Delta x^\Delta(t),
    \label{sys:x_noatt}
\end{align}
where the uncertain matrix $L^\Delta$ is defined as: $L^\Delta \triangleq \bar{L} + \Delta$, $\Delta  \in \Omega$,
    where $\Omega$ is a closed and bounded set,
    $\bar{L} \triangleq [\bar{\ell}_{ij}]$ and $\Delta \triangleq [\delta_{ij}]$ are nominal value and bounded uncertainty of $L^\Delta$, respectively.
Next,
let us make use of the following assumptions.
\begin{assumption} \label{ass:zero_init}
    We assume that the  healthy networked control system \eqref{sys:x_noatt} is at its equilibrium before being attacked.  
\end{assumption}
\begin{assumption}
    \label{ass:target_vertex}
    The input of the given performance vertex $v_\tau$ is protected from any attacks. 
    Further, its state is unmeasurable. 
\end{assumption}

Then, except for the protected target vertex $v_\tau$, a detector  monitors the system by placing a sensor at the output of a single vertex $v_m \in \Vc \setminus \{v_\tau\}$.
On the other hand, the system is attacked by an adversary, whose detailed descriptions are listed in the following subsection.


\subsection{Adversary description}
This part introduces resources and an attack strategy of the  adversary with limited system knowledge, so-called bounded-rational adversary \cite[Def. 2.2]{anand2021risk}.

\subsubsection{System knowledge}
The adversary knows the location of the protected target vertex $v_\tau$, the appearance of a detector, the set of $N$ vertices $\Vc$, and the set of edges $\Ec$.
However, the adversary does not know the exact location of the detector and has limited knowledge about $L^\Delta$ in  \eqref{sys:x_noatt}.
The adversary only knows  $\bar{L}$ and  $\Omega$ instead of $L^\Delta$.

\subsubsection{Disruption resource}
Except for the protected target vertex $v_\tau$, the adversary is able to conduct a cyber-attack on the input of another vertex.
The adversary firstly assumes the location of a monitor vertex $v_m$ selected by the detector.
Then, the adversary
selects a vertex $v_a \in \Vc \setminus \{v_\tau\}$ and injects a malicious attack signal $a(t) \in \Rbb$ on its input with the aim of manipulating the output of the protected target vertex $v_\tau$.
The control input \eqref{sys:u} of vertex $v_i, ~ i \in \{ 1,2,\ldots,N\}$, received from its controller over the attacked wireless network can be described as follows
\begin{align}
    \tilde{u}_i(t) = u_i(t) + 
    \begin{cases}
        0, ~ & v_i \neq v_a,
        \\
        a(t), ~ & v_i = v_a.
    \end{cases}
    \label{sys:ua}
\end{align}
Thus, the adversary perceives the system model \eqref{sys:x_noatt} under the control law \eqref{sys:ua}  with two  outputs at the two vertices $v_\tau$ and $v_m$ as an uncertain dynamical system described by
\begin{align}
	\dot x^\Delta(t) &= -L^\Delta x^\Delta(t) + e_a a(t),
	\label{sys:x_unc}
	\\
	y^\Delta_\tau (t) &= e_\tau^\top x^\Delta(t),
	\label{sys:yt}
	\\
	y^\Delta_m (t) &= e_m^\top x^\Delta(t).
	\label{sys:ym}
\end{align}

\subsubsection{Adversary strategy}
The goal of the adversary is to maliciously manipulate the output of the protected target vertex $v_\tau$ while remaining stealthy with the detector.
To this end, the adversary conducts the stealthy data injection attack, which is defined as follows.
Consider the above structure of the uncertain continuous-time system \eqref{sys:x_unc}-\eqref{sys:ym} which is denoted as $\Sigma^\Delta_{\tau,m} \triangleq (-L^\Delta,e_a,[e_\tau,e_m]^\top,0)$, with  target output $y_\tau^\Delta (t) = e_\tau^\top x^\Delta (t)$ and  monitor output $y^\Delta_m (t) = e_m^\top x^\Delta(t)$.
The input signal $a(t)$ of the system $\Sigma^\Delta_{\tau,m}$ is called the stealthy data injection attack if the  monitor output satisfies $\norm{y^\Delta_m}_{\Lc_{2}[0,T]}^2 < \sigma$, in which $\sigma > 0$ is called an alarm threshold. 
Further, the impact of the stealthy data injection attack is measured via the energy of the  target output over the horizon $[0,T]$, i.e., $\norm{y^\Delta_\tau}_{\Lc_{2}[0,T]}^2$.

Due to limited system knowledge, the uncertain system dynamics \eqref{sys:x_unc}-\eqref{sys:ym} are not explicitly available to the adversary.
Such an issue causes a difficulty for the adversary in designing of the attack strategy. 
To deal with the issue,
the next section adopts a risk metric to evaluate the attack impact over the probabilistic uncertainty set, which can be evaluated by the adversary to select an attack vertex.

\section{Problem formulation}
\label{sec:prob_form}

We consider that both the adversary and the detector have the same bounded uncertainty about the system knowledge. 
Based on this assumption, for a given uncertain parameter and attack and monitor vertices, the attack impact is characterized via an optimal control problem. Then, we aggregate the attack impact over the probabilistic uncertainty set by means of a risk metric. Finally, the  problem of optimal selection of attack and monitor vertices is formulated as a zero-sum game between two strategic players, the adversary and the detector, where the game payoff corresponds to the risk of the attack impact evaluated over the probabilistic uncertainty.
\subsection{Stealthy data injection attack policy}
Due to the presence of uncertainty in the system model \eqref{sys:x_unc}-\eqref{sys:ym}, the attack impact $J_\tau(v_a,v_m;\Delta,a)$ on the target vertex $v_\tau$ by the attack vector $a \in \Lc_{2e}$ becomes a function of the random variable  $\Delta \in \Omega$
\begin{align}
	J_\tau(v_a,v_m;\Delta,a) \triangleq 	\norm{y^\Delta_\tau}_{\Lc_2}^2 
	\Ibb_{\Ac} (a),
	\label{J_tau_delta}
	\\
	\Ac \triangleq \{ a |~  \norm{y^\Delta_m}_{\Lc_2}^2 \leq \sigma,\eqref{sys:x_unc},\eqref{sys:ym}, ~ x(0) = 0 \},
\end{align}
where 
$y^\Delta_\tau(t)$ and $y^\Delta_m(t)$ are the output of the target vertex $v_\tau$ and the output of the monitor vertex $v_m$, 
respectively.
From \eqref{J_tau_delta}, the worst-case attack impact on the target vertex $v_\tau$ with the random variable $\Delta \in \Omega$ can be formulated as follows
\begin{align}
    \sup_{a \in \Lc_{2e}} J_\tau (v_a,v_m;\Delta,a).
    \label{J_tau_delta_sup}
\end{align}
It is worth noting that \eqref{J_tau_delta_sup} is introduced to evaluate the worst-case attack impact for each pair of $v_a$ and $v_m$, thus allowing one to compare the impact for different pairs of attack and monitor vertices.
Further,
the worst-case attack impact \eqref{J_tau_delta_sup} is proportional to the alarm threshold $\sigma$ for all possible pairs of $v_a$ and $v_m$.
Therefore, without loss of generality, let us set the alarm threshold $\sigma = 1$ in the remainder of this paper.
\begin{rem}
     Due to the random variable $\Delta \in \Omega$, the worst-case impact \eqref{J_tau_delta_sup} becomes a random variable.
     Thus, in order to compare the worst-case impacts made by pairs of $v_a$ and $v_m$ over the uncertainty set $\Omega$, we need to employ a risk metric which will be introduced in the rest of this subsection.
\end{rem}

After investigating the worst-case attack impact \eqref{J_tau_delta_sup} on the target vertex $v_\tau$ with all the possible pairs of attack $v_a$ and monitor vertices $v_m$, the adversary firstly chooses the attack vertex $v_a$ such that the corresponding risk (defined in \textit{Definition \ref{def:var}}) is maximized \cite{anand2021risk}.
Then, the adversary directly injects the stealthy data injection attack on the input of the selected attack vertex $v_a$.
To this end, the adversary  deals with the following optimization problem:
\begin{align}
	\max_{v_a\neq v_\tau \in \Vc} &~\Jc_\tau(v_a,v_m), 
	\label{J_tau_va}
	\\
	\Jc_\tau(v_a,v_m) = \Rc_{\Delta \in \Omega} &\Big[ ~\sup_{a \in \Lc_{2e}} J_\tau(v_a,v_m;\Delta,a) \Big], \label{game_payoff}
\end{align}
where $\Rc_{\Delta \in \Omega}$ is a risk metric evaluated over the probabilistic uncertainty set. In this paper, we use the well-known Value-at-Risk \cite{duffie1997overview} as our risk metric, which is defined below.
\begin{definition}
    \label{def:var}
    (Value-at-Risk (VaR)): Given a random variable $X$ and $\beta \in (0,1)$, the VaR is defined as 
        \begin{align}
            \text{VaR}_\beta (X) \triangleq 
            \inf \big\{ x | \Pbb 
            \big[ X \leq x \big] \geq 1 - \beta \big\}.
        \end{align}
    With a specified level $\beta \in (0,1)$, VaR$_\beta$ is the lowest amount of $x$ such that with probability $1-\beta$, the random variable $X$ does not exceed $x$.
    \QET
\end{definition}

In order to counter the adversary, the detector adopts the game-theoretic approach to design its detection strategy, which will be introduced in the next part.

\subsection{Game-theoretic approach to sensor placement}
The detector chooses a vertex $v_m \in \Vc \setminus \{v_\tau\}$ and monitors its output with the purpose of minimizing the risk \eqref{game_payoff}. 
Hence, the detector addresses the following problem.
\begin{problem} \label{prob:opt}
	(Optimal monitor selection) Given a target vertex $v_{\tau}$ and an arbitrary attack vertex $v_a$,
	select a monitor vertex that minimizes the risk corresponding to the worst-case attack impact $\Jc_\tau(v_a,v_m)$ defined in \eqref{game_payoff}.
\end{problem}

The above detector objective is converted into the following optimization problem:
\begin{align}
	\min_{v_m \neq v_{\tau} \in \Vc} \Jc_\tau(v_a,v_m). \label{J_tau_vm}
\end{align}

From the scenario of a single-adversary-single-detector we are considering, the adversary objective \eqref{J_tau_va}, and the detector objective \eqref{J_tau_vm}, we formulate Problem~\ref{prob:opt} as a zero-sum game with the game payoff  \eqref{game_payoff} between two players, i.e., the adversary and the detector, as follows:
\begin{align}
    \min_{v_m \neq v_{\tau} \in \Vc} ~
    \max_{v_a \neq v_{\tau} \in \Vc} ~
    \Jc_\tau(v_a,v_m).
    \label{zero-sum_game}
\end{align}
The  min-max optimization problem \eqref{zero-sum_game} admits a saddle-point equilibrium $(v_a^\star,v_m^\star)$ \cite{zhu2015game} if and only if it satisfies 
\begin{align}
    -\infty < \Jc_\tau(v_a,v_m^\star) \leq &\Jc_\tau(v_a^\star,v_m^\star) \leq \Jc_\tau(v_a^\star,v_m) < \infty, ~
    \\
    &~~~~~~~~~~
    \forall v_a,v_m \in \Vc \setminus \{v_\tau\}.
    \label{sadde-point}
\end{align}
The game payoff of the saddle-point equilibrium $\Jc_\tau(v_a^\star,v_m^\star)$ implies that a deviation of the attack vertex $\forall v_a \in \Vc \setminus \{v_\tau,v_a^\star\}$ does not gain the game payoff and a deviation of the monitor vertex $\forall v_m \in \Vc \setminus \{v_\tau,v_m^\star\}$ does not decrease the game payoff.
\begin{rem}
     Since the zero-sum game \eqref{zero-sum_game} determined by discrete decisions of the adversary and the detector might be solved via linear programming \cite[Ch. 5]{boyd2004convex},
     we need to evaluate the game payoff defined in \eqref{game_payoff} for all the possible pairs of $v_a$ and $v_m$. 
     However, computing \eqref{game_payoff} requires us not only to address the non-convexity of the worst-case impact~\eqref{J_tau_delta_sup}
     but also to devise a computationally efficient approximation of \eqref{game_payoff} over a continuous uncertainty set.
\end{rem}

The next section will give us an efficient method to approximately compute the game payoff \eqref{game_payoff} for each selected pair of $v_a$ and $v_m$.

\section{Evaluating the game payoff}
\label{sec:evl_game}
There are two difficulties in solving the zero-sum game \eqref{zero-sum_game}. 
The first difficulty is that: for any given pair of $v_a,v_m$, and uncertainty $\Delta \in \Omega$, the function $\sup_{a \in \Lc_{2e}} J_{\tau}(v_a,v_m;\Delta,a)$ is a non-convex optimization problem. 
Secondly, since the set $\Omega$ is continuous, the problem of assessing the game payoff \eqref{game_payoff} is computationally intractable. 
Thus, in this section, we aim to address both difficulties by invoking the scenario approach \cite{calafiore2007probabilistic} that discretizes the uncertainty set $\Omega$.
\subsection{Worst-case attack impact for a sampled uncertainty point}

We begin by considering the case of a sampled uncertainty realization $\Delta_i \in \Omega$. 
Let us denote the value of the corresponding uncertain Laplacian matrix in \eqref{sys:x_unc} as $L^{\Delta_i}$
and the uncertain system \eqref{sys:x_unc}-\eqref{sys:ym} as
$\Sigma^{\Delta_i}_{\tau,m} \triangleq \big(-L^{\Delta_i},e_a,[e_\tau,e_m]^\top,0\big)$ with the attack input at vertex $v_a$, the target output at vertex $v_\tau$, and the monitor output at vertex $v_m$.
For such an isolated uncertainty, the worst-case attack impact can be written as
\begin{equation}\label{policy_discrete}
    \sup_{a \in \Lc_{2e}} J_{\tau}(v_a,v_m;\Delta_i,a)
\end{equation}

Following the details in  \cite[Prop. 1]{teixeira2021security}, the optimal control problem \eqref{policy_discrete} can be equivalently rewritten as the following convex SDP
\begin{align}
	\gamma_i^\star \triangleq  \underset{\gamma_i \in \Rbb_+, P_i = P_i^\top \geq  0}{\min} & ~~~~ \gamma_i
	\label{opti_LMI_discrete} \\
	\text{s.t.}  ~~~~~~ 
	&R \big( \Sigma^{\Delta_i}_{\tau,m}, P_i, \gamma_i
	\big) \leq 0,  
	\non  
\end{align}
where
\begin{align}
   R \big( \Sigma^{\Delta_i}_{\tau,m}, P_i, \gamma_i
	\big) \triangleq & 
	\ba{cc}
	-L^{\Delta_i} P_i - P_i L^{\Delta_i} ~&~ P_i e_a
	\\ 
	e_a^\top P_i ~&~ 0
	\ea 
	\non \\
	&- 
	\ba{cc}
	\gamma_i  e_m e^\top_m + e_\tau e^\top_\tau ~&~ 0 
	\\ 
	0 ~&~ 0
	\ea.
\end{align}

Next, we tackle the game payoff evaluation over a continuous set of uncertainties $\Omega$ by first approximating the continuous uncertainty set $\Omega$ with a discrete set $\Omega_{M_1}$ of sampled uncertainty realizations, with cardinality $M_1$, and then using the point-wise evaluation of the worst-case attack impact described in~\eqref{opti_LMI_discrete}.

\subsection{Approximate game payoff function}

The game payoff \eqref{game_payoff} is difficult to determine since the risk metric operates over a continuous set $\Omega$. 
To this end, we adopt the scenario approach \cite{calafiore2007probabilistic} to approximate the continuous uncertainty set $\Omega$, and consequently determine the approximate game payoff \eqref{game_payoff}. Before this, we rewrite \eqref{game_payoff} for a given $\beta \in (0,1)$ as \eqref{game_payoff_rewrite}.
\begin{align}\label{game_payoff_rewrite}
    \Jc_{\tau}(v_a,v_m) = ~~ &\inf \gamma\\
      &~ \text{s.t.} ~~  \mathbb{P}_{\Omega} [X \leq \gamma] \geq 1-\beta
\end{align}
where $X = \sup_{a \in \Lc_{2e}} J_{\tau}(v_a,v_m;\Delta,a),~ \Delta \in \Omega$, and the subscript to the probability operator denotes that it operates over the set $\Omega$. 
Next, we apply the scenario approach to determine the approximate value of the optimization problem \eqref{game_payoff_rewrite} in the following theorem. 

\begin{theorem}\label{Th:gamma_N1}
Let $\epsilon_1 \in (0,1)$ represent the accuracy with which the probability operator $\mathbb{P}_{\Omega}$ in \eqref{game_payoff_rewrite} is approximated. Let $\beta_1 \in (0,1)$ represent the confidence with which the accuracy $\epsilon_1$ is guaranteed, i.e.,
\begin{equation}\label{certificates}
    \mathbb{P}\{ \vert \mathbb{P}_{\Omega}(X \leq \gamma) - \hat{\mathbb{P}}_{M_1} \vert \geq\epsilon_1\} \leq \beta_1.
\end{equation}
Here $\hat{\mathbb{P}}_{M_1}$ represents the approximation of the probability operator $\mathbb{P}_{\Omega}$ in \eqref{game_payoff_rewrite} defined as
\begin{equation}
    \hat{\mathbb{P}}_{M_1} \triangleq \frac{1}{M_1}\sum_{i=1}^{M_1}\mathbb{I}\left( X \leq \gamma \right), \; \text{where} \; M_1 \geq \frac{1}{2\epsilon_1^2}\text{log}\frac{2}{\beta_1}.
    \label{th:M1}
\end{equation}
Then, the $\text{VaR}_{\beta}$ defined in \eqref{game_payoff} can be obtained with an accuracy $\epsilon_1$ and confidence $\beta_1$ by solving
\begin{equation}\label{risk_approximate}
\hat{\gamma} = \left\{\begin{aligned}
\min & \quad \gamma\\
\textrm{s.t.} & \quad  \frac{1}{M_1} \sum_{i=1}^{M_1} \mathbb{I}\left( \gamma_i^\star \leq \gamma \right) \geq 1-\beta_1
\end{aligned}\right\},
\end{equation}
where $\hat{\gamma}$ represents the $\text{VaR}_{\beta}$  with an accuracy $\epsilon_1$. 
The value of  $\gamma^\star_i, i \in \{ 1,2,\dots,M_1 \}$, is obtained by solving \eqref{opti_LMI_discrete}.
\QET
\end{theorem}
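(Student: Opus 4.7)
The overall plan is to recognize the statement as a standard scenario-approach concentration result built on top of the pointwise SDP reformulation of the worst-case attack impact. The argument decomposes into three conceptually separate steps: (i) replace the intractable supremum $\sup_{a\in\Lc_{2e}} J_\tau(v_a,v_m;\Delta_i,a)$ by the computable quantity $\gamma_i^\star$; (ii) control the empirical–population gap of the probability via a Hoeffding-type concentration inequality; (iii) translate this concentration bound into an accuracy guarantee for the sample-based $\mathrm{VaR}$.

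For step (i), I would invoke \cite[Prop.~1]{teixeira2021security} directly: for every sampled realization $\Delta_i\in\Omega$ the non-convex optimal control problem $\sup_{a\in\Lc_{2e}} J_\tau(v_a,v_m;\Delta_i,a)$ admits the convex SDP reformulation \eqref{opti_LMI_discrete}, so $\gamma_i^\star = \sup_{a\in\Lc_{2e}} J_\tau(v_a,v_m;\Delta_i,a)$. Consequently, for any fixed threshold $\gamma$, the random variables $Y_i(\gamma) \triangleq \Ibb(\gamma_i^\star \leq \gamma)$, $i=1,\dots,M_1$, are i.i.d.\ Bernoulli with success probability $p(\gamma) \triangleq \Pbb_{\Omega}(X \leq \gamma)$, and their empirical mean is precisely $\hat{\Pbb}_{M_1}$ as defined in \eqref{th:M1}.

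For step (ii), since each $Y_i(\gamma)\in\{0,1\}$, Hoeffding's inequality yields
\begin{equation*}
    \Pbb\bigl\{\,\bigl\lvert \hat{\Pbb}_{M_1} - p(\gamma) \bigr\rvert \geq \epsilon_1\,\bigr\}
    \;\leq\; 2\exp\!\bigl(-2 M_1 \epsilon_1^2\bigr).
\end{equation*}
Requiring the right-hand side to be at most $\beta_1$ and solving for $M_1$ gives exactly the sample-complexity bound $M_1 \geq \tfrac{1}{2\epsilon_1^{2}}\log\tfrac{2}{\beta_1}$ stated in \eqref{th:M1}, which immediately delivers the certificate \eqref{certificates}.

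For step (iii), I would show that the concentration bound propagates to the $\mathrm{VaR}$ computed via the infimum in \eqref{game_payoff_rewrite}. With probability at least $1-\beta_1$, the event $\{\hat{\Pbb}_{M_1}(\gamma) \geq 1-\beta\}$ implies $p(\gamma) \geq 1-\beta-\epsilon_1$, so every feasible $\gamma$ for the empirical problem \eqref{risk_approximate} is feasible for the true problem up to an $\epsilon_1$-relaxation of the chance constraint, and conversely. Taking the infimum on both sides yields $\lvert \hat{\gamma} - \mathrm{VaR}_{\beta}(X) \rvert \leq \epsilon_1$ with confidence $1-\beta_1$, which is the desired claim. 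I expect the main obstacle to lie precisely here: one must keep careful track of whether the $\epsilon_1$-perturbation is applied to $\beta$ or to $\gamma$, and verify that the infimum is well-defined (i.e.\ that the empirical feasible set is non-empty and bounded below, which follows from $\gamma_i^\star \geq 0$ and finiteness of the sample); the Hoeffding step itself is routine once the Bernoulli structure from step (i) is in place.
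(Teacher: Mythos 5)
The paper's own ``proof'' of Theorem~\ref{Th:gamma_N1} is a one-line citation to \cite[Th.~4.4]{anand2021risk}; you have instead reconstructed the underlying argument from scratch. Your decomposition is the right one and matches the mathematics behind the cited result: step~(i) correctly uses \cite[Prop.~1]{teixeira2021security} to identify $\gamma_i^\star$ with $\sup_{a\in\Lc_{2e}}J_\tau(v_a,v_m;\Delta_i,a)$, making $\hat{\Pbb}_{M_1}$ an empirical mean of i.i.d.\ Bernoulli variables, and step~(ii) correctly derives the sample bound $M_1\geq\tfrac{1}{2\epsilon_1^2}\log\tfrac{2}{\beta_1}$ from Hoeffding's inequality. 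What your self-contained route buys is transparency about where the constants come from; what the paper's citation buys is brevity and consistency with the authors' earlier framework.

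There are, however, two issues in step~(iii). First, Hoeffding's inequality as you invoke it holds for a \emph{fixed} threshold $\gamma$, whereas the argument needs the bound at the data-dependent minimizer $\hat{\gamma}$ (or, equivalently, simultaneously at two thresholds bracketing the true quantile). You should either apply the pointwise bound at the two deterministic points $\text{VaR}_{\beta+\epsilon_1}$ and $\text{VaR}_{\beta-\epsilon_1}$ with a union bound, or replace Hoeffding by the Dvoretzky--Kiefer--Wolfowitz inequality, whose tight (Massart) constant gives the \emph{uniform} deviation $\sup_\gamma\lvert\hat{\Pbb}_{M_1}(\gamma)-\Pbb_\Omega(X\leq\gamma)\rvert$ at exactly the same rate $2e^{-2M_1\epsilon_1^2}$, so the sample bound \eqref{th:M1} is unchanged. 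Second, and more substantively, your concluding claim $\lvert\hat{\gamma}-\text{VaR}_\beta(X)\rvert\leq\epsilon_1$ does not follow: the sandwich argument yields $\text{VaR}_{\beta+\epsilon_1}(X)\leq\hat{\gamma}\leq\text{VaR}_{\beta-\epsilon_1}(X)$, i.e.\ the $\epsilon_1$-accuracy is in the \emph{probability level}, not in the value of $\gamma$ (the two coincide only if the distribution function of $X$ is suitably invertible near the quantile). This is in fact how the theorem itself defines $\epsilon_1$ --- as the accuracy of the probability operator --- so the fix is simply to state the conclusion in that form. You flag this ambiguity yourself but then resolve it in the wrong direction. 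A final minor point worth noting: the constraint in \eqref{risk_approximate} is written with $1-\beta_1$ where the approximation of \eqref{game_payoff_rewrite} calls for $1-\beta$; your reconstruction implicitly uses the correct level, which suggests the discrepancy in the statement is a typo rather than a feature your proof must reproduce.
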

\begin{proof}
The proof follows directly from our previous results in \cite[Th. 4.4]{anand2021risk}.
\end{proof}
\begin{rem}
     Solving \eqref{risk_approximate} with the risk metric defined in \textit{Definition \ref{def:var}} gives us a measure of risk for a corresponding pair of $v_a$ and $v_m$ that has been evaluated over the explicit probabilistic uncertainty set $\Omega$.
     This risk measure is different from the worst-case impact \eqref{J_tau_delta_sup}, which is a function of a random variable $\Delta \in \Omega$.
\end{rem}
\textit{Theorem~\ref{Th:gamma_N1}} provides a method to compute the approximate value of game payoff \eqref{game_payoff} which was difficult to compute previously. 
In order to evaluate the result of \textit{Theorem~\ref{Th:gamma_N1}},
the next subsection will address the feasibility of the optimization problem \eqref{risk_approximate}.

\subsection{Feasibility analysis}
For $M_1$ sampled uncertainty $\Delta_i \in \Omega_{M_1}$, the following lemma gives us the necessary and sufficient condition to ensure that the  problem \eqref{risk_approximate} is feasible and therefore admits a finite upper bound.
\begin{lemma}[Boundedness]\label{bound_OA_multi}
Consider $M_1$ i.i.d. realizations of uncertainty $\Delta_i \in \Omega_{M_1}$. 
The optimal solution of \eqref{risk_approximate} with these $M_1$ realizations of uncertainty is bounded if and only if the optimal value of \eqref{opti_LMI_discrete} is bounded for at least $\left \lceil{M_1(1-\beta_1)}\right \rceil $ system realizations. 
\QET
\end{lemma}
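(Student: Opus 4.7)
The plan is to reformulate the chance constraint in \eqref{risk_approximate} as a simple counting condition on the sampled values $\gamma_i^\star$, and then prove each direction of the equivalence by an elementary order-statistic argument. Denote $K \triangleq \lceil M_1(1-\beta_1) \rceil$ and observe that the constraint
\begin{equation*}
\frac{1}{M_1}\sum_{i=1}^{M_1}\mathbb{I}\!\left(\gamma_i^\star \leq \gamma\right)\;\geq\;1-\beta_1
\end{equation*}
holds if and only if at least $K$ of the $M_1$ sampled SDP optimal values $\gamma_i^\star$ satisfy $\gamma_i^\star \leq \gamma$. Throughout I use the convention that if \eqref{opti_LMI_discrete} is infeasible or its infimum is not attained by any finite $\gamma_i$, then $\gamma_i^\star = +\infty$; otherwise $\gamma_i^\star$ is finite, which is what ``bounded'' means for each per-sample problem.

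For sufficiency, I would assume at least $K$ realizations $\Delta_i \in \Omega_{M_1}$ yield a finite $\gamma_i^\star$. Let $\gamma_{(1)}^\star \leq \gamma_{(2)}^\star \leq \cdots \leq \gamma_{(K)}^\star$ denote the $K$ smallest entries in the multiset $\{\gamma_i^\star\}_{i=1}^{M_1}$; by hypothesis these are all finite. Choosing $\hat{\gamma} = \gamma_{(K)}^\star$ makes at least $K$ indicators equal to one, so the chance constraint is satisfied and the minimization in \eqref{risk_approximate} is attained at a finite value no larger than $\gamma_{(K)}^\star$.

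For necessity, I would argue by contraposition. Suppose strictly fewer than $K$ of the samples produce a finite $\gamma_i^\star$, i.e., at least $M_1 - K + 1$ samples are unbounded, with $\gamma_i^\star = +\infty$. For any finite candidate $\gamma$, the indicator $\mathbb{I}(\gamma_i^\star \leq \gamma)$ vanishes on every unbounded sample, so
\begin{equation*}
\sum_{i=1}^{M_1}\mathbb{I}\!\left(\gamma_i^\star \leq \gamma\right)\;\leq\;K-1\;<\;K,
\end{equation*}
which violates the chance constraint. Therefore no finite $\gamma$ is feasible for \eqref{risk_approximate}, and its optimal value is $+\infty$, i.e., unbounded.

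The argument has essentially no technical obstacle; the two ingredients are the counting equivalence above and the bookkeeping convention $\gamma_i^\star = +\infty$ when the per-sample SDP is infeasible. The only place where I would take a little care is to make the $+\infty$ convention explicit when stating what ``$\gamma_i^\star$ is not bounded'' means for \eqref{opti_LMI_discrete}, so that the indicator-based step in the necessity proof is unambiguous; once this is in place the proof is a direct consequence of ordering the finite $\gamma_i^\star$ and comparing their count to the threshold $K$.
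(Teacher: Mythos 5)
Your proof is correct. The counting reduction is exactly right: since $\sum_{i=1}^{M_1}\mathbb{I}(\gamma_i^\star\leq\gamma)$ is integer-valued, the chance constraint in \eqref{risk_approximate} is equivalent to requiring that at least $K=\lceil M_1(1-\beta_1)\rceil$ of the sampled optimal values satisfy $\gamma_i^\star\leq\gamma$; the sufficiency direction via the $K$-th order statistic and the necessity direction by contraposition are both sound, and your explicit convention $\gamma_i^\star=+\infty$ for infeasible instances of \eqref{opti_LMI_discrete} is the right bookkeeping to make the indicator argument airtight. The only point of comparison worth noting is that the paper does not actually prove this lemma in situ --- it defers entirely to \cite[Lem.~4.5]{anand2021risk} --- so your writeup supplies the elementary order-statistic argument that the citation papers over. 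This is a useful addition rather than a deviation: it makes the result self-contained and makes transparent that the threshold $\lceil M_1(1-\beta_1)\rceil$ arises purely from the integrality of the indicator sum, with no dependence on the i.i.d.\ structure of the samples (which matters only for the probabilistic guarantees of Theorem~\ref{Th:gamma_N1}, not for this boundedness equivalence).
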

\begin{proof}
The proof follows directly from our previous results in \cite[Lem. 4.5]{anand2021risk}.
\end{proof}

Then, we investigate the feasibility of the optimization problem \eqref{opti_LMI_discrete} for a system realization corresponding to a given sampled uncertainty $\Delta_i \in \Omega_{M_1}$.
Let us denote the following  systems 
$\Sigma^{\Delta_i}_\tau \triangleq (-L^{\Delta_i},e_a, e^\top_\tau,0)$ and $\Sigma^{\Delta_i}_m \triangleq (-L^{\Delta_i},e_a,e^\top_m,0)$. 
Inspired by \cite[Th. 2]{teixeira2015strategic}, the feasibility of the optimization problem \eqref{opti_LMI_discrete} is related to the invariant zeros of $\Sigma^{\Delta_i}_\tau$ and $\Sigma^{\Delta_i}_m$, which are defined as follows.

\begin{definition}  \label{def:invariant_zero}
	(Invariant zeros)
	Consider the strictly proper system $\Sigma \triangleq (A,B,C,0)$ with $A,B,$ and $C$ are real matrices with appropriate dimensions. A tuple $(\lambda,\bar{x},g) \in \Cbb \times \Cbb^N \times \Cbb$ is a zero dynamics of $\Sigma$ if it satisfies
	\begin{align}
		\ba{cc}
		\lambda I - A ~~~~ & -B \\
		C & 0
		\ea
		\ba{c}
		\bar{x} \\ g
		\ea
		=
		\ba{c}
		0 \\ 0
		\ea,
		~~~ \bar{x} \neq 0.
		\label{definv:mtr_pen}
	\end{align}
	In this case, a finite $\lambda$ is called a finite invariant zero of $\Sigma$.
	Further, the strictly proper system $\Sigma$ always has at least one invariant zero at infinity \cite[Ch. 3]{franklin2002feedback}.
	\QET
\end{definition}

More specifically, let us state the following lemma.
\begin{lemma} \label{lemma:opt_feasible} \cite[Th. 2]{teixeira2015strategic}
    Consider the two following continuous time systems $\Sigma^{\Delta_i}_\tau \triangleq (-L^{\Delta_i},e_a, e^\top_\tau,0)$ and $\Sigma^{\Delta_i}_m \triangleq (-L^{\Delta_i},e_a,e^\top_m,0)$.
    The optimization problem \eqref{opti_LMI_discrete} is feasible if and only if the unstable invariant zeros of $\Sigma^{\Delta_i}_m$ are also invariant zeros of $\Sigma^{\Delta_i}_\tau$.
    \QET
\end{lemma}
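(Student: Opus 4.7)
The plan is to exploit the well-known correspondence between feasibility of the bounded-real-type LMI in \eqref{opti_LMI_discrete} and finiteness of the output-to-output $\Lc_2$-gain from the monitor signal $y_m$ to the target signal $y_\tau$ induced by the attack input $a$, with zero initial condition. Interpreted as a dissipation inequality, \eqref{opti_LMI_discrete} asserts the existence of a quadratic storage function $V(x) = x^\top P_i x$ with $P_i \succeq 0$ certifying $\norm{y^{\Delta_i}_\tau}_{\Lc_2}^2 \leq \gamma_i\, \norm{y^{\Delta_i}_m}_{\Lc_2}^2$ along every trajectory of \eqref{sys:x_unc}--\eqref{sys:ym} starting from $x(0)=0$. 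Thus, proving the lemma reduces to characterizing precisely when this output-to-output gain is finite.

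For the necessity direction I would argue by contrapositive. Suppose $\lambda \in \Cbb$ with $\text{Re}(\lambda) \geq 0$ is an invariant zero of $\Sigma^{\Delta_i}_m$ with direction $(\bar{x}, g)$ in the sense of Definition \ref{def:invariant_zero}, but not of $\Sigma^{\Delta_i}_\tau$ (so $e_\tau^\top \bar{x} \neq 0$). Along the zero-dynamics trajectory $x(t) = \bar{x}\, e^{\lambda t}$, $a(t) = g\, e^{\lambda t}$, one has $y_m \equiv 0$ while $y_\tau = (e_\tau^\top \bar{x})\, e^{\lambda t} \not\equiv 0$. Since our problem enforces $x(0)=0$, I would invoke controllability of the reachable subspace to construct, for arbitrary $\varepsilon > 0$, an input on a transient window $[0, T_0]$ that brings the state within $\varepsilon$ of $\bar{x}$ while keeping $\norm{y_m}_{\Lc_2[0,T_0]}$ bounded by a constant independent of the terminal time $T$, and then continue with the zero-dynamics input on $[T_0, T]$. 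As $T \to \infty$, the ratio $\norm{y_\tau}_{\Lc_2[0,T]}^2 / \norm{y_m}_{\Lc_2[0,T]}^2$ diverges, so no finite $\gamma_i$ can satisfy \eqref{opti_LMI_discrete}.

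For sufficiency, assume every invariant zero $\lambda$ of $\Sigma^{\Delta_i}_m$ with $\text{Re}(\lambda) \geq 0$ is also an invariant zero of $\Sigma^{\Delta_i}_\tau$. The goal is to exhibit $P_i \succeq 0$ and $\gamma_i < \infty$ that verify the LMI. I would put the pair $(\Sigma^{\Delta_i}_m, \Sigma^{\Delta_i}_\tau)$ into a Morse/structural normal form that separates the zero dynamics of $\Sigma^{\Delta_i}_m$ from the coordinates directly driven by $y_m$; on the stable part of the zero dynamics a quadratic Lyapunov function exists, while the hypothesis forces $e_\tau$ to vanish on the unstable zero-dynamics subspace, so $y_\tau$ is insensitive to the unbounded component. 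The complementary coordinates can be bounded in energy by $\norm{y_m}_{\Lc_2}$ via the system's relative-degree chain, yielding an explicit storage matrix and a sufficiently large $\gamma_i$.

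The hard part is this sufficiency construction: producing $P_i$ in closed form requires either the normal-form decomposition just outlined or, equivalently, invoking classical solvability results for the associated generalized Riccati equation. The cleanest route I would actually take is to appeal to the well-established equivalence between feasibility of the output-to-output gain LMI and the invariant-zero inclusion stated in the lemma, which is precisely the content of \cite[Th.~2]{teixeira2015strategic}, thereby closing the argument without a lengthy normal-form derivation.
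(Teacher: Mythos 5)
Your proposal is correct and ultimately lands on the same approach as the paper, which states this lemma as a cited result and offers no independent proof beyond the appeal to \cite[Th.~2]{teixeira2015strategic}. The zero-dynamics argument you sketch for necessity (driving $y_m^{\Delta_i}\equiv 0$ along an unstable zero direction not shared by $\Sigma^{\Delta_i}_\tau$ so that no finite $\gamma_i$ can certify \eqref{opti_LMI_discrete}) and the dissipation-inequality reading of the LMI for sufficiency are a faithful outline of what that reference establishes, so the extra detail is a bonus rather than a deviation.
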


Inspired by \textit{Lemma~\ref{lemma:opt_feasible}},
we will investigate both finite and infinite invariant zeros of the two systems $\Sigma^{\Delta_i}_m$ and $\Sigma^{\Delta_i}_\tau$.

\subsubsection*{\textbf{Finite invariant zeros}}
Let us state the
following lemma that considers the finite invariant zeros of $\Sigma^{\Delta_i}_m$.
\begin{lemma}  \label{lemma:no_un_zero}
	Consider a networked control system associated with a connected undirected graph $\Gc \triangleq (\Vc,\Ec,A,\Theta)$, 
	whose closed-loop dynamics is described in \eqref{sys:x_unc}-\eqref{sys:ym} for a given sampled uncertainty $\Delta_i \in \Omega_{M_1}$.
	Suppose that the networked control system is driven by the stealthy data injection attack  at a single attack vertex $v_a$, and observed by a single monitor vertex $v_m$, resulting in the state-space model $\Sigma^{\Delta_i}_m \triangleq (-L^{\Delta_i},e_a,e^\top_m,0)$.
	Then, there exist self-loop control gains $\theta_i, ~ i \in \{1,2,\ldots,N\},$ in \eqref{sys:u} such that the networked control system $\Sigma^{\Delta_i}_m$ has no 
	finite unstable invariant zero.
	\QET
\end{lemma}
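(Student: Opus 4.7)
My plan is to exploit the additive structure of the Laplacian with respect to the design gain matrix $\Theta = \mathrm{diag}(\theta_1,\ldots,\theta_N)$. By construction $L = D - A + \Theta$, and the random perturbation $\Delta_i$ is added to $\bar L$ independently of the control design, so I can decompose
\[
L^{\Delta_i} = M^{\Delta_i} + \Theta, \qquad M^{\Delta_i} \triangleq D - A + \Delta_i,
\]
where $M^{\Delta_i}$ depends only on the graph topology and the sampled uncertainty, not on $\Theta$. The strategy is then to pick the simplest admissible form $\Theta = \theta I$ with $\theta > 0$, which will shift the whole finite invariant-zero locus of $\Sigma^{\Delta_i}_m$ rigidly to the left by $\theta$.

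Concretely, substituting $L^{\Delta_i} = M^{\Delta_i} + \theta I$ into the Rosenbrock pencil appearing in \textit{Definition~\ref{def:invariant_zero}} gives
\[
\begin{bmatrix} \lambda I + L^{\Delta_i} & -e_a \\ e_m^\top & 0 \end{bmatrix}
=
\begin{bmatrix} (\lambda+\theta) I + M^{\Delta_i} & -e_a \\ e_m^\top & 0 \end{bmatrix}.
\]
Consequently, $\lambda$ is a finite invariant zero of $\Sigma^{\Delta_i}_m \triangleq (-L^{\Delta_i},e_a,e_m^\top,0)$ if and only if $\mu := \lambda+\theta$ is a finite invariant zero of the auxiliary, $\theta$-independent system $\tilde{\Sigma}^{\Delta_i}_m \triangleq (-M^{\Delta_i},e_a,e_m^\top,0)$. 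Thus the common self-loop gain $\theta$ acts as a pure real translation of the finite invariant-zero set.

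Since $\tilde{\Sigma}^{\Delta_i}_m$ is a fixed strictly proper SISO realization of order $N$, its finite invariant zeros $\{\mu_k\}$ are the roots of a polynomial of degree at most $N-1$ and therefore form a finite (possibly empty) set. I will then select
\[
\theta \;>\; \max_k \mathrm{Re}(\mu_k),
\]
with the convention that the maximum is $-\infty$ if the set is empty, and set $\theta_1 = \cdots = \theta_N = \theta$. By the shift identity above, every finite invariant zero of $\Sigma^{\Delta_i}_m$ then satisfies $\mathrm{Re}(\lambda_k) = \mathrm{Re}(\mu_k) - \theta < 0$, which is precisely the claim. The only delicate point I expect to verify carefully is that the decomposition $L^{\Delta_i} = M^{\Delta_i} + \Theta$ is legitimate for every realization in $\Omega$, i.e., that the uncertainty set is not reshaped by the choice of $\Theta$; this follows directly from the modeling convention that $\Theta$ is the deterministic self-loop component absorbed into the nominal Laplacian $\bar L$, while $\Delta$ is an independent perturbation of $\bar L$.
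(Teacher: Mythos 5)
Your proposal is correct and follows essentially the same route as the paper's own proof: both exploit the fact that a uniform self-loop gain enters the Rosenbrock pencil as a rigid real shift of the finite invariant-zero locus, and then choose the gain large enough to push all finite zeros into the open left half-plane. Your phrasing in terms of the $\Theta$-independent auxiliary system $(-M^{\Delta_i},e_a,e_m^\top,0)$ is marginally cleaner than the paper's offset construction $\Sigma_{0m}^{\Delta_i}=(-L^{\Delta_i}-\theta_0 I,e_a,e_m^\top,0)$, but it is the same argument.
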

\begin{proof}
    We postpone the proof to Appendix A.
\end{proof}

The constructive proof of \textit{Lemma~\ref{lemma:no_un_zero}} (see Appendix A) gives us a design procedure to ensure that the system $\Sigma^{\Delta_i}_m$ has no finite unstable zero.

\subsubsection*{\textbf{Infinite invariant zeros}}
We now investigate the infinite invariant zeros of the systems $\Sigma^{\Delta_i}_m$ and $\Sigma^{\Delta_i}_\tau$. 
In the investigation, we make use of known results connecting infinite invariant zeros mentioned in \textit{Definition \ref{def:invariant_zero}} and the relative degree of a linear system, which is defined below.
\begin{definition} \label{def:rela_deg}
	(Relative degree)
	\cite[Ch. 13]{khalil2002nonlinear} Consider the strictly proper system $\Sigma \triangleq (A,B,C,0)$ with $A \in \Rbb^{n \times n}$, $B$, and $C$ are real matrices with appropriate dimensions.
	The system $\Sigma$ is said to have relative degree $r ~ (1 \leq r \leq n) $ if the following conditions satisfy
	\begin{align}
		&C A^{k} B = 0, ~~ 0 \leq k < r-1,
		\non \\
		&C A^{r-1} B \neq 0.
		\label{def_red}
	\end{align}
\end{definition}

Based on \textit{Definition~\ref{def:rela_deg}}, let us denote $r_{\tau a}$ and $r_{ma}$ as the relative degrees of $\Sigma^{\Delta_i}_\tau$ and $\Sigma^{\Delta_i}_m$, respectively.
In the scope of this study, we have assumed that the cyber-attack \eqref{sys:ua} has no direct impact on the outputs \eqref{sys:yt} and \eqref{sys:ym}, resulting in strictly proper systems
$\Sigma^{\Delta_i}_\tau$ and $\Sigma^{\Delta_i}_m,~\forall \Delta_i \in \Omega_{M_1}$.
This implies that the relative degrees $r_{\tau a}$ and $r_{ma}$ of $\Sigma^{\Delta_i}_\tau$ and $\Sigma^{\Delta_i}_m$ are positive, yielding their infinite zeros.
By following our existing result related to those infinite zeros \cite[Th. 7]{nguyen2022single}
the infinite zeros of $\Sigma_m^{\Delta_i}$ are also the infinite zeros of $\Sigma_\tau^{\Delta_i}$ if and only if the following condition holds
\begin{align}
	r_{ma} \leq r_{\tau a}. 
	\label{cond_red}
\end{align}

\subsubsection*{\textbf{Boundedness of solutions}} After analyzing both finite and infinite zeros of the two systems $\Sigma_m^{\Delta_i}$ and $\Sigma_\tau^{\Delta_i}$, the following theorem gives us a sufficient condition to ensure the feasibility of the optimization problem \eqref{opti_LMI_discrete}, and thus of the existence of a finite upper bound on the corresponding optimal value.
\begin{theorem}
    \label{th:opt_fea}
    Consider the strictly proper systems $\Sigma^{\Delta_i}_\tau \triangleq (-L^{\Delta_i},e_a,e^\top_\tau,0)$ and $\Sigma^{\Delta_i}_m \triangleq (-L^{\Delta_i},e_a,e^\top_m,0)$, in which the two systems have the same stealthy data injection attack input at a single attack vertex $v_a$ but different output vertices, i.e., $v_\tau$ for $\Sigma^{\Delta_i}_\tau$ and $v_m$ for $\Sigma^{\Delta_i}_m$.
	Suppose the systems $\Sigma^{\Delta_i}_\tau$ and $\Sigma^{\Delta_i}_m$ have relative degrees $r_{\tau a}$ and $r_{ma}$, respectively.
	Then, the  problem \eqref{opti_LMI_discrete} admits a finite solution if 
	\begin{enumerate}
	    \item the self-loop control gains $\theta_i,~i \in \{1,2,\ldots,N\}$, in \eqref{sys:u} are chosen such that the system $\Sigma^{\Delta_i}_m$ has no finite unstable zeros; and
	    \item the condition \eqref{cond_red} holds.
	    \QET
	\end{enumerate}
\end{theorem}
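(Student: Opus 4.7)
The plan is to invoke \textit{Lemma~\ref{lemma:opt_feasible}} as the central tool, since it provides a necessary and sufficient characterization of feasibility of \eqref{opti_LMI_discrete} in terms of the invariant zeros of $\Sigma^{\Delta_i}_\tau$ and $\Sigma^{\Delta_i}_m$. Concretely, I would argue that under the two hypotheses of the theorem, every unstable invariant zero of $\Sigma^{\Delta_i}_m$ is also an invariant zero of $\Sigma^{\Delta_i}_\tau$, so that \textit{Lemma~\ref{lemma:opt_feasible}} yields feasibility, and hence a finite upper bound on the optimal value of \eqref{opti_LMI_discrete}.

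To make this argument precise, I would partition the invariant zeros of $\Sigma^{\Delta_i}_m$ into finite and infinite ones, in line with \textit{Definition~\ref{def:invariant_zero}}, and handle each class separately. For the finite part, hypothesis (1) together with \textit{Lemma~\ref{lemma:no_un_zero}} guarantees that $\Sigma^{\Delta_i}_m$ has no finite unstable invariant zero, so there is nothing to match on $\Sigma^{\Delta_i}_\tau$ for this class. For the infinite part, since both $\Sigma^{\Delta_i}_\tau$ and $\Sigma^{\Delta_i}_m$ are strictly proper by construction (no direct feedthrough of the attack on either output), both systems have at least one zero at infinity, and the multiplicities of these infinite zeros are encoded in the relative degrees $r_{\tau a}$ and $r_{m a}$ defined in \textit{Definition~\ref{def:rela_deg}}. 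Invoking the result recalled just above the theorem, i.e.\ the equivalence used in \cite[Th.~7]{nguyen2022single}, hypothesis (2) in the form \eqref{cond_red} gives $r_{m a} \leq r_{\tau a}$, which is exactly the condition for the infinite zeros of $\Sigma^{\Delta_i}_m$ to also be infinite zeros of $\Sigma^{\Delta_i}_\tau$.

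Combining the two steps, every unstable invariant zero of $\Sigma^{\Delta_i}_m$, whether finite or infinite, is an invariant zero of $\Sigma^{\Delta_i}_\tau$. Applying \textit{Lemma~\ref{lemma:opt_feasible}} then shows that \eqref{opti_LMI_discrete} is feasible, and because the SDP in \eqref{opti_LMI_discrete} minimizes $\gamma_i$ subject to the LMI constraint $R(\Sigma^{\Delta_i}_{\tau,m}, P_i, \gamma_i) \leq 0$, feasibility of this LMI translates directly into a finite optimal value $\gamma_i^\star$.

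The main obstacle I anticipate is a clean bookkeeping argument that infinite invariant zeros are indeed the only remaining candidates for ``unstable invariant zeros'' once hypothesis (1) has ruled out the finite unstable ones; this requires being explicit that in the strictly proper setting of this paper, infinite zeros are always present and must be treated as potential obstructions to feasibility, and that the relative-degree inequality \eqref{cond_red} is precisely the right matching condition for them. Apart from this, the argument is essentially a direct composition of \textit{Lemmas~\ref{lemma:opt_feasible}} and \ref{lemma:no_un_zero} with the relative-degree characterization of infinite zeros.
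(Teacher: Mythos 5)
Your proposal is correct and follows essentially the same route as the paper's own proof: both reduce feasibility of \eqref{opti_LMI_discrete} to the invariant-zero matching condition of \textit{Lemma~\ref{lemma:opt_feasible}}, dispatch the finite unstable zeros via the design procedure of \textit{Lemma~\ref{lemma:no_un_zero}}, and handle the infinite zeros through the relative-degree condition \eqref{cond_red} and the result of \cite[Th.~7]{nguyen2022single}. If anything, your explicit finite/infinite partition of the unstable invariant zeros is organized slightly more cleanly than the paper's version, which interleaves the necessity and sufficiency remarks.
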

\begin{proof}
    The proof is postponed to Appendix B.
\end{proof}

The sufficient condition \eqref{cond_red} will be verified by computing the approximate game payoffs \eqref{risk_approximate} in \textit{Theorem~\ref{Th:gamma_N1}} and the equilibrium of the zero-sum game \eqref{zero-sum_game} will be analyzed via a numerical example in the next section.

\section{Numerical examples}
\label{sec:num_eg}
To validate the obtained results, through a numerical example, this section 
$i)$ applies \eqref{risk_approximate} with two different values of $\beta$ to the example with the aim of verifying \eqref{cond_red}; 
$ii)$ examines the saddle-point equilibrium of the zero-sum game \eqref{zero-sum_game} with the two different values of $\beta$;
$iii)$ computes the mixed-strategy Nash equilibrium of the zero-sum game in case there is no saddle-point equilibrium.
Let us take an example of a 10-vertex  networked control system depicted in Fig. \ref{fig:graph_10vertex}. 
The simulation parameters are chosen as follows: 
\begin{align}
    L^\Delta & \triangleq [\ell_{ij}^\Delta] =  [\bar{\ell}_{ij}] + [\delta_{ij}] + \Theta,
    \\
    \bar{\ell}_{ij} &= -10, ~\delta_{ij} \in [-0.5, 0.5], ~ \forall (v_i,v_j) \in \Ec, i \neq j,
    \\
    \bar{\ell}_{ij} &= \delta_{ij} = 0, ~ \forall (v_i,v_j) \notin \Ec,
    \\
    \ell_{ii}^\Delta &= -\sum_{v_j \in \Nc_i} \big( \bar{\ell}_{ij} + \delta_{ij} \big),
    ~\theta_0 = 0.5. 
\end{align}
\subsection{Computing the approximate game payoff}
To compute \eqref{risk_approximate}, let us choose $\epsilon_1 = 0.06, \beta_1 = 0.08$,  
and $M_1 = 450$, which satisfy \eqref{th:M1}.
For any sampled uncertainty $\Delta_i \in \Omega_{M_1}$,
the chosen uniform offset self-loop control gain $\theta_0$ (see Appendix A) ensures that
$\Sigma^{\Delta_i}_m$ has no finite unstable zero, which validates \textit{Lemma~\ref{lemma:no_un_zero}}.
We will present two cases by selecting two values of the
specified level $\beta$ in \eqref{game_payoff_rewrite}, i.e., $\beta_a = 0.08$ and $\beta_b = 0.15$.
Suppose that $v_5$ is the protected target vertex (see \textit{Assumption~\ref{ass:target_vertex}} and Fig.~\ref{fig:graph_10vertex}). 
There are two possible monitor vertices $v_2$ and $v_6$, which satisfy the necessary and sufficient condition \eqref{cond_red} for any $v_a \in \Vc \setminus \{v_5\}$ (see Fig. \ref{fig:graph_10vertex}).
For more clarity, we compute the approximate game payoff \eqref{risk_approximate} w.r.t. the target vertex $v_5$ for each pair of $v_a \in \Vc \setminus \{v_5\}$ and $v_m \in \{v_2,v_6\}$ in the cases $\beta = \beta_a$ and $\beta = \beta_b$, which gives us
\begin{align}
    \Jc_5(v_a,v_{m=2};\beta_a) \leq 1.5848,~
    \Jc_5(v_a,v_{m=6};\beta_a) \leq 1.5055,
    \\
    \Jc_5(v_a,v_{m=2};\beta_b) \leq 1.5550,~
    \Jc_5(v_a,v_{m=6};\beta_b) \leq 1.4803.
\end{align}
Otherwise, there exits  at least an attack vertex $v_a \in \Vc \setminus \{v_5\}$ pairing with an arbitrary monitor $v_m \in \Vc \setminus  \{v_2,v_5,v_6\}$ to yield infinite game payoffs, e.g., $\Jc_5(v_{a=3},v_{m=1};\beta_a) = \infty$,
$\Jc_5(v_{a=3},v_{m=1};\beta_b) = \infty$,
$\Jc_5(v_{a=10},v_{m=3};\beta_a) = \infty$, and
$\Jc_5(v_{a=10},v_{m=3};\beta_b) = \infty$.
In order to explain those infinite values,
we verify the  condition \eqref{cond_red} by checking the relative degrees among those vertices via Fig.~\ref{fig:graph_10vertex}, i.e., 
$r_{a=3,m=5} = 2 < r_{a=3,m=1} = 3$ and $r_{a=10,m=5} = 2 < r_{a=10,m=3} = 3$, which violate the necessary and sufficient condition \eqref{cond_red}.

\begin{figure} [!t]
	\centering
	\includegraphics[width=0.4\textwidth]{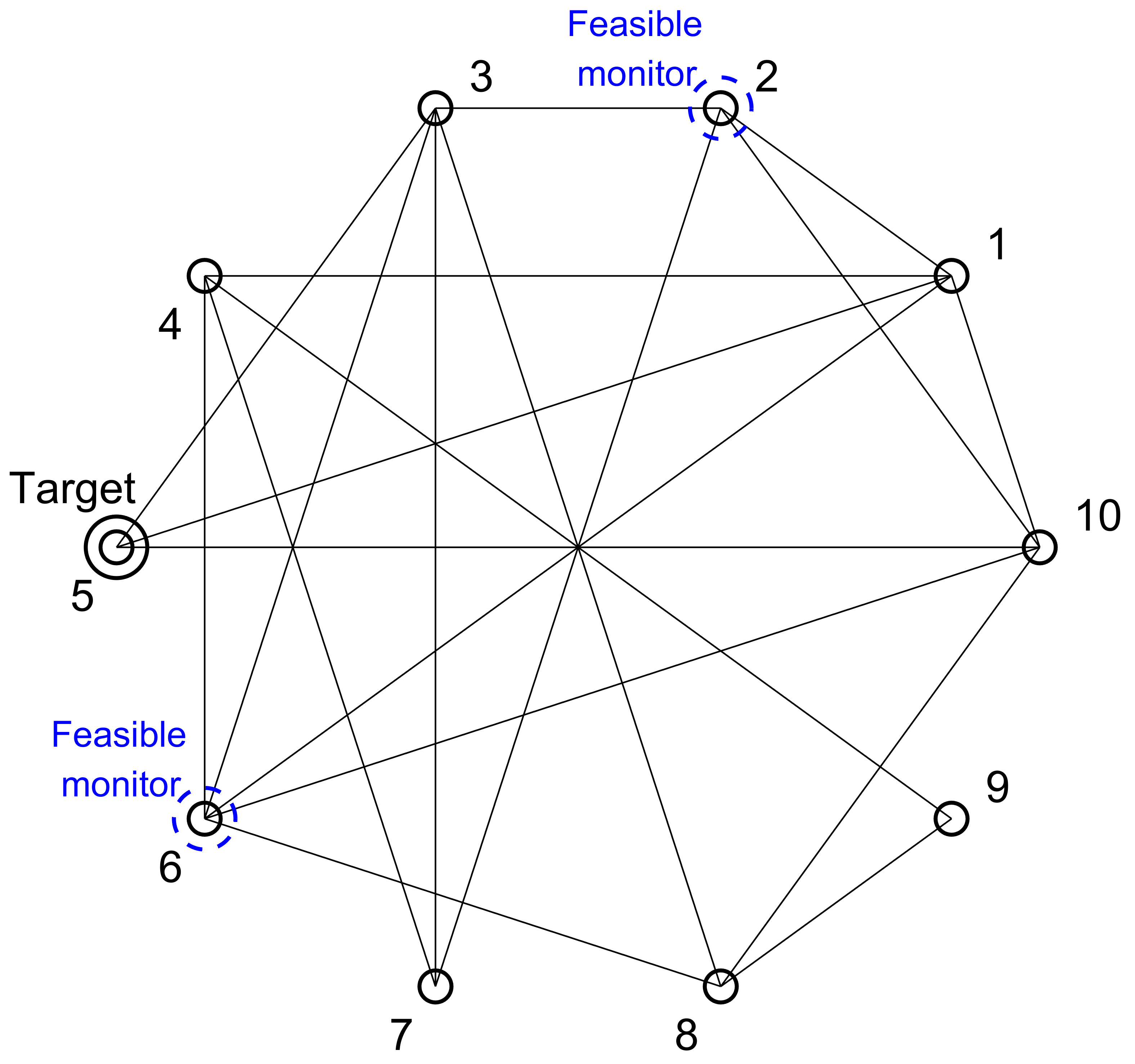}
	\caption{10-vertex networked control system with target vertex $v_5$.}
	\label{fig:graph_10vertex}
	\includegraphics[width=0.4\textwidth]{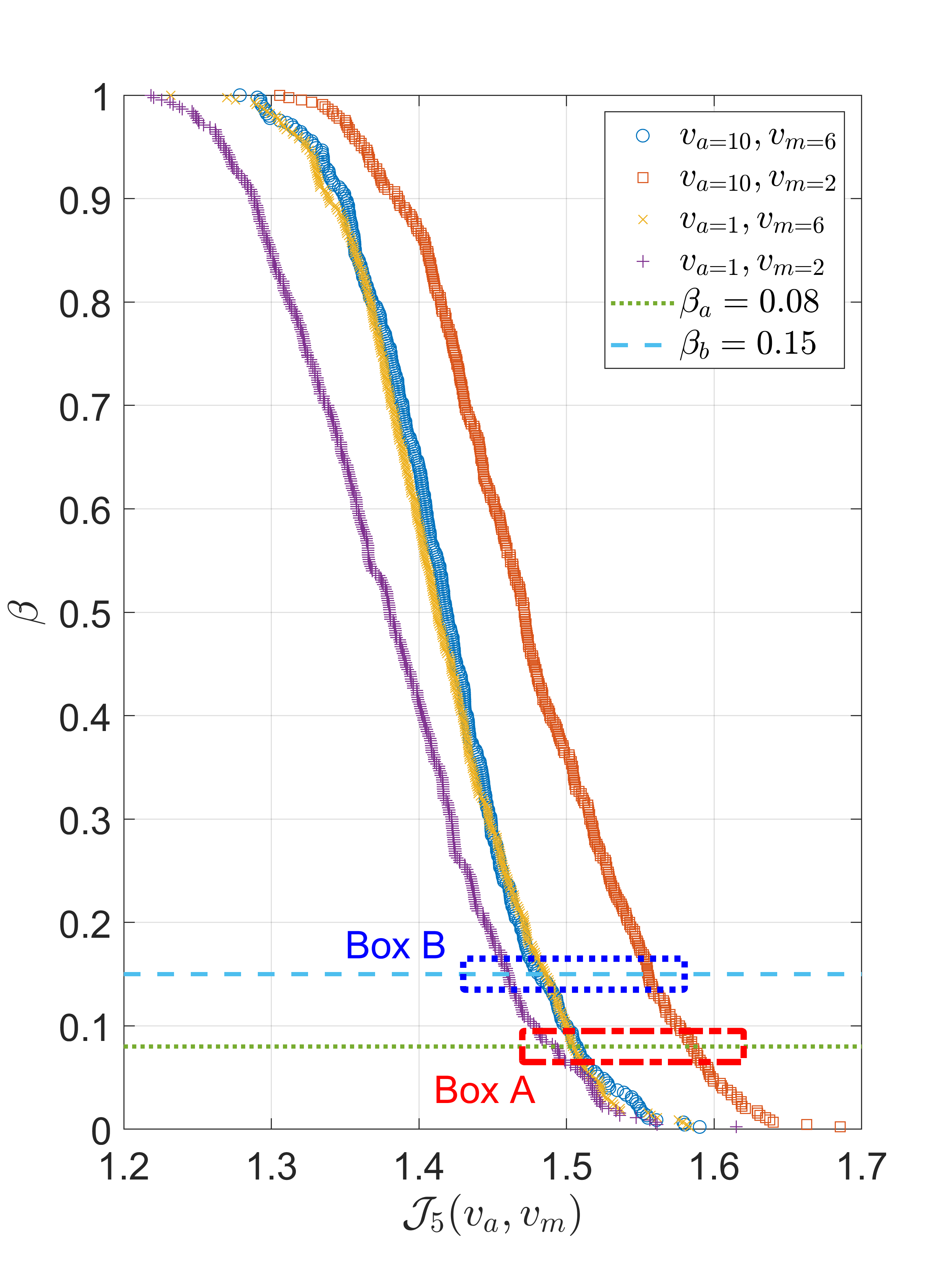}
    \caption{Approximate game payoff \eqref{risk_approximate} with $\beta = \beta_a = 0.08$ and $\beta = \beta_b = 0.15$ in case the detector selects $v_{m=2}$ or $v_{m=6}$ and the adversary attacks $v_{a=1}$ or $v_{a=10}$.
    The other game payoffs yielded by the other choices of $v_a$ and $v_m$ are removed due to the ineffectiveness.
    }
    \label{fig:var_beta8}
\end{figure}
\subsection{Examining the saddle-point equilibrium}
Next, we will investigate the equilibrium of the zero-sum game in the cases $\beta = \beta_a$ and $\beta = \beta_b$. 
Fig.~\ref{fig:var_beta8} illustrates the game payoffs for $v_a \in \{v_1,v_{10}\}$ and $v_m \in \{v_2,v_6\}$ corresponding to $\Delta_i \in \Omega_{M_1}$.
In both cases $\beta = \beta_a$ and $\beta = \beta_b$, since those game payoffs dominate the values of the other choices of $v_a \in \Vc \setminus \{v_1,v_5,v_{10}\}$ and $v_m \in \{v_2,v_6\}$, we only show four marked-lines in Fig.~\ref{fig:var_beta8}.

\subsubsection*{\textbf{In the first case $\beta = \beta_a = 0.08$}}
the crossing points of the green dotted-line and marked-lines are the approximate game payoffs with $\beta = \beta_a$ for the corresponding pairs of attack and monitor vertices (see Box A in Fig.~\ref{fig:var_beta8}).
By observing those approximate game payoffs in Fig.~\ref{fig:var_beta8}, one has
\begin{align}
    &~~~~~~~~~~
    \Jc_5(\forall v_a \in \Vc \setminus \{ v_5,v_{10}\},v_{m=6} ;\beta_a) 
    \\
    &<
    \Jc_5(v_{a=10},v_{m=6};\beta_a) 
    < \Jc_5(v_{a=10},v_{m=2};\beta_a).
    \label{sadde-point-beta1}
\end{align}
According to the definition of the saddle-point equilibrium in \eqref{sadde-point}, the  inequalities \eqref{sadde-point-beta1} imply that the example admits a saddle-point equilibrium $(v_{a}^\star = v_{10},v_m^\star = v_6)$ with $\beta = \beta_a$.
\subsubsection*{\textbf{In the second case $\beta = \beta_b = 0.15$}}
the approximate game payoffs are the crossing points of the marked-lines and the blue dashed-line (see Box B in Fig.~\ref{fig:var_beta8}).
Those crossing points give us \\
\begin{align}
    \Jc_5(v_{a=1},v_{m=2};\beta_b) = 1.4603,
    \\
    \Jc_5(v_{a=10},v_{m=6};\beta_b) = 1.4803,
    \\
    \Jc_5(v_{a=1},v_{m=6};\beta_b) = 1.4856, 
    \\
    \Jc_5(v_{a=10},v_{m=2};\beta_b) = 1.5550. 
    \label{sadde-point-beta2}
\end{align}
From \eqref{sadde-point-beta2}, we will examine whether a saddle-point equilibrium exists.
If the detector monitors $v_{m=2}$, the adversary simply attacks $v_{a=10}$ to maximize the risk.
But, in the case of $v_{a=10}$, the detector can move to $v_{m=6}$ to reduce the risk since $\Jc_5(v_{a=10},v_{m=6};\beta_b) < \Jc_5(v_{a=10},v_{m=2};\beta_b)$.
Then, the adversary can obtain a higher risk by attacking $v_{a=1}$ instead of $v_{a=10}$, i.e., $\Jc_5(v_{a=1},v_{m=6};\beta_b) > \Jc_5(v_{a=10},v_{m=6};\beta_b)$.
Monitoring $v_{m=2}$ yields a lower risk for the detector, i.e., $\Jc_5(v_{a=1},v_{m=2};\beta_b) < \Jc_5(v_{a=1},v_{m=6};\beta_b)$.
The story comes back to the beginning since the adversary simply attacks $v_{a=10}$ to maximize the risk.
The above observation implies that the example with $\beta = \beta_b$ does not admit a saddle-point equilibrium defined in \eqref{sadde-point}.
However, the game
always admits a mixed-strategy Nash equilibrium \cite{zhu2015game}, which will be computed in the next subsection.
\subsection{Computing mixed-strategy Nash equilibrium}
We compute the mixed-strategy Nash equilibrium for the example with the cases $\beta = \beta_a$ and $\beta = \beta_b$.
Let us denote $\Pbb(v_a;\beta)$ and $\Pbb(v_m;\beta)$, $\beta \in \{\beta_a = 0.08,\beta_b =0.15\}$ as the probabilities for attack $v_a$ and monitor vertices $v_m$, respectively. 
For convenience, we denote
$\bar{\Pbb}(v_a;\beta) = \left[\Pbb(v_{a=1};\beta),\ldots,\Pbb(v_{a=10};\beta)\right]^\top, (v_a \neq v_5)$
and
\\
$\bar{\Pbb}(v_m;\beta) = \left[\Pbb(v_{m=1};\beta),\ldots,\Pbb(v_{m=10};\beta)\right]^\top, (v_m \neq v_5)$.
The expected game payoff of the example w.r.t. the target vertex $v_5$ for attack vertex $v_a$ and monitor vertex $v_m$ is 
\begin{align}
	Q_5(v_a,v_m;\beta) = \bar{\Pbb}(v_a;\beta)^\top \bar{\Jc}_5 \bar{\Pbb}(v_m;\beta), 
\end{align}
where $\bar{\Jc}_5 = \big[\Jc_5(v_i,v_j;\beta)_{ij}\big]$ is a $9\times9$-game matrix, whose $ij$-entry is filled by $\Jc_5(v_{a=i},v_{m=j};\beta)$.
Similarly to \eqref{sadde-point},
there exits a saddle point $(v_a^\star,v_m^\star)$ if it satisfies
\begin{align}
	Q_5(v_a,v_m^\star;\beta) \leq &Q_5(v_a^\star,v_m^\star;\beta) \leq Q_5(v_a^\star,v_m;\beta),
	\\ 
	& ~~~~~~~~~~~~~~
	\forall v_a,v_m   \in \Vc \setminus \{v_\tau\}.
	\label{exp_gamepayoff}
\end{align}
The saddle point $(v_a^\star,v_m^\star)$ in \eqref{exp_gamepayoff} indicates that a deviation of selecting $v_a(v_m)$ does not increase(decrease) the optimal expected game payoff $Q_5(v_a^\star,v_m^\star;\beta)$. 
Further,
since the possible choices of the detector are restricted to $\{v_2,v_6\}$, we simply obtain $\Pbb(\forall v_m \in \Vc \setminus \{v_2,v_5,v_6\};\beta) = 0$. 
More specifically, by using linear programming \cite[Ch. 5]{boyd2004convex} to compute \eqref{exp_gamepayoff}, we receive the following optimal solution
\subsubsection*{\textbf{In the first case $\beta = \beta_a = 0.08$}}
\begin{align}
    &\Pbb^\star(v_{m=6};\beta_a) = 100 \%,~ \Pbb^\star(v_{m=2};\beta_a) = 0 \%, \\
    &\Pbb^\star(v_{a=10};\beta_a) = 100 \%, ~ \Pbb^\star(\forall v_a \in \Vc \setminus  \{5,10\};\beta_a) = 0 \%.
\end{align}
The above optimal solution once again confirms that a pair $(v_{a}^\star = v_{10},v_m^\star = v_6)$ is the pure saddle-point equilibrium \eqref{sadde-point} of the example with $\beta = \beta_a$, which was also verified in \eqref{sadde-point-beta1}.

\subsubsection*{\textbf{In the second case $\beta = \beta_b = 0.15$}} we obtain the following optimal solution
\begin{align}
    &\Pbb^\star(v_{m=6};\beta_b) \approx 94.72 \%,~ \Pbb^\star(v_{m=2};\beta_b) \approx 5.28 \%, \\
    &\Pbb^\star(v_{a=10};\beta_b) \approx 25.29 \%, ~ \Pbb^\star(v_{a=1};\beta_b) \approx 74.71 \%,
    \\
    &\Pbb^\star(\forall v_a \in \Vc \setminus \{1,5,10\};\beta_b) = 0 \%.
\end{align}
The above optimal solution clearly show that the example does not admit a pure saddle-point equilibrium \eqref{sadde-point} with $\beta = \beta_b$, which was discussed at the end of the previous subsection.
\section{Conclusion}
\label{sec:concl}
In this paper, we studied a continuous-time networked control system attacked by an adversary with uncertain system knowledge. 
The purpose of the adversary was to manipulate the output of a protected target vertex by directly conducting the stealthy data injection attack on another vertex.
Meanwhile,
an optimal sensor placement problem was formulated such that
a detector with the same uncertain system knowledge places a sensor at a vertex in order to unmask the adversary. 
We developed a risk-based game-theoretic framework to describe the interactions between the two players, the adversary and the detector, in the presence on probabilistic parameter uncertainty. In particular, we formulate the optimal decisions as a zero-sum game, where the game payoff is taken as a risk metric evaluated over the probabilistic uncertainty set.
Due to the continuous nature of the uncertainty set, the zero-sum game could not be solved directly. 
Thus, we employed the scenario approach to approximately compute the game payoff over a number of samples of uncertain parameters.
After approximately evaluating the game payoff for each pair of monitor and attack vertices, the mixed-strategy Nash equilibrium of the zero-sum game was also computed by linear programming.
In future works, our game will be expanded to consider multiple attack and monitor vertices.
Characterizing an analytical solution to the equilibrium of the game between the adversary and the detector would also be a promising topic.

\section*{Appendix}
\subsection{Proof of Lemma~\ref{lemma:no_un_zero}}
\label{app_pf_lemma_opt_fea}
    
    Let us denote a tuple $(\lambda^{\Delta_i}_m,\bar{x}^{\Delta_i}_m,g^{\Delta_i}_m) \in \Cbb \times \Cbb^N \times \Cbb$ as a zero dynamics of $\Sigma_m^{\Delta_i}$, where a finite $\lambda_m^{\Delta_i}$ is called a finite invariant zero of $\Sigma_m^{\Delta_i}$.
    From \textit{Definition~\ref{def:invariant_zero}}, one has that the tuple $(\lambda^{\Delta_i}_m,\bar{x}^{\Delta_i}_m,g^{\Delta_i}_m)$ satisfies
    \begin{align}
         \ba{cc}
        \lambda^{\Delta_i}_m  I + L^{\Delta_i}  & -e_a \\
        e_m^\top & 0
        \ea 
        \ba{c} \bar{x}^{\Delta_i}_m \\ g^{\Delta_i}_m \ea 
        = 
        \ba{c} 0 \\ 0 \ea.
        \label{pen_mtr_lam_m}
    \end{align}
    The above equation is rewritten as
    \begin{align}
        \ba{cc}
        (\lambda^{\Delta_i}_m - \theta_0) I + L^{\Delta_i} + \theta_0 I & -e_a \\
        e_m^\top & 0
        \ea 
        \ba{c} \bar{x}^{\Delta_i}_m \\ g_m \ea 
        = 
        \ba{c} 0 \\ 0 \ea,
        \\
        \label{pen_mtr_lam_m1}
    \end{align}
    where $\theta_0 \in \Rbb_+$ is a uniform offset self-loop control gain.
    From \eqref{pen_mtr_lam_m1}, the finite value $(\lambda^{\Delta_i}_m-\theta_0) \in \Cbb$ is an invariant zero of a new state-space model $\Sigma_{0m}^{\Delta_i} \triangleq (-L^{\Delta_i}-\theta_0I,e_a,e^\top_m,0)$.
    For all $\lambda_m^{\Delta_i} \in \Cbb$ satisfies \eqref{pen_mtr_lam_m1},
    the control gain $\theta_0$ can be adjusted such that $\theta_0 >$ Re$(\lambda_m^{\Delta_i})$, resulting in that $\Sigma_{0m}^{\Delta_i}$ has no finite unstable zero.
    Then, the self-loop control gains $\theta_i, ~ i \in \{1,2,\ldots,N\},$ in \eqref{sys:u} are tuned with $\theta_0$ such that the system $\Sigma_m^{\Delta_i}$  is identical with $\Sigma_{0m}^{\Delta_i}$.
    By this tuning procedure, the system $\Sigma^{\Delta_i}_{m}$ also has no finite unstable invariant zero.
\QEDB
\subsection{Proof of Theorem~\ref{th:opt_fea}}
\label{app_pf_th_opt_fea}
	Based on \textit{Lemma~\ref{lemma:opt_feasible}},
	the optimization problem \eqref{opti_LMI_discrete} is feasible if and only if  $\Sigma^{\Delta_i}_m$ has  unstable invariant zeros that are also invariant zeros of  $\Sigma^{\Delta_i}_\tau$.
	By applying the control design procedure in the proof of \textit{Lemma \ref{lemma:no_un_zero}} (see Appendix A), we ensure that $\Sigma^{\Delta_i}_m$ has no finite unstable invariant zeros, which leaves us to analyze infinite zeros of those systems.
    Recall the equivalence between the relative degree of a SISO system and the degree of its infinite zero. 
 	Hence, a necessary condition to guarantee the feasibility of the optimization \eqref{opti_LMI_discrete} is that the number of infinite invariant zeros of $\Sigma^{\Delta_i}_m$ is not greater than that of $\Sigma^{\Delta_i}_\tau$.
 	This implies  $r_{m a} \leq r_{\tau a}$. 
 	For sufficiency, it remains to show that 
 	if $r_{m a} \leq r_{\tau a}$, any infinite zeros of $\Sigma^{\Delta_i}_{m}$ are also infinite zeros of $\Sigma^{\Delta_i}_{\tau}$.
The proof directly follows our previous results \cite[Th. 7]{nguyen2022single}.
\QEDB


\bibliographystyle{ieeetr} 
\bibliography{ref}             


\end{document}